\def\principaladviser#1{\gdef\@principaladviser{#1}}
\newcommand{\Sym}{\mathrm{Sym}}
\newtheorem{theorem}{Theorem}[section]
\newtheorem{lemma}[theorem]{Lemma}
\newtheorem{corollary}[theorem]{Corollary}
\newtheorem{definition}[theorem]{Definition}
\newtheorem{proposition}[theorem]{Proposition}
\newtheorem{remark}[theorem]{Remark}
\def\be{\begin{equation}}
\def\ee{\end{equation}}
\def\bea{\begin{eqnarray}}
\def\eea{\end{eqnarray}}
\begin{document}

\title{On perfect symmetric rank-metric codes}

\author{Usman Mushrraf and Ferdinando Zullo}

\maketitle   

\begin{abstract}
Let $\Sym_q(m)$ be the space of symmetric matrices in $\mathbb{F}_q^{m\times m}$. A subspace of $\Sym_q(m)$ equipped with the rank distance is called a symmetric rank-metric code. In this paper we study the covering properties of symmetric rank-metric codes. First we characterize symmetric rank-metric codes which are perfect, i.e. that satisfy the equality in the sphere-packing like bound.
We show that, despite the rank-metric case, there are non trivial perfect codes. 
Also, we characterize families of codes which are quasi-perfect.
\end{abstract}
\noindent\textbf{MSC2020:}{ 94B05; 94B65; 94B27 }\\
\textbf{Keywords:}{ symmetric matrix; covering density; perfect code }
\section{Introduction}

Rank metric codes were first introduced by Delsarte in \cite{Delsarte:1978aa} and independently by Gabidulin in \cite{gabidulin1985theory} and Roth in \cite{roth1991maximum}. These codes have been extensively researched due to their applications in crisscross error correction \cite{roth1991maximum}, cryptography \cite{gabidulin1991ideals}, and network coding \cite{silva2008rank}; see \cite{bartz2022rank} for more applications. The coding-theoretic properties of these codes have been thoroughly studied, and optimal codes with respect to the Singleton-like bound, known as Maximum Rank Distance (MRD for short) codes, have been constructed. For more details, interested readers can refer to \cite{GorlaRavagnani,Sheekey2019}.

The study of subsets of \emph{restricted} matrices equipped with the rank metric began in 1975 by Delsarte and Goethals in \cite{delsarte1975alternating}, where they considered sets of alternating bilinear forms. The theory developed in \cite{Delsarte:1978aa} and \cite{delsarte1975alternating} also found applications in classical coding theory. Specifically, the evaluations of the forms found in \cite{delsarte1975alternating} give rise to subcodes of the second-order Reed-Muller codes, including the Kerdock codes and the chain of Delsarte–Goethals codes; see also \cite{schmidt2010symmetric}.

Using the theory of association schemes, bounds, constructions, and structural properties of restricted rank metric codes have been investigated in symmetric matrices \cite{bik2023higher,gabidulin2002representation,gabidulin2004symmetric,gabidulin2006symmetric,longobardi2020automorphism,schmidt2015symmetric,zhou2020equivalence}, alternating matrices \cite{abiad2024eigenvalue,delsarte1975alternating}, and Hermitian matrices \cite{schmidt2018hermitian,trombetti2020maximum}.

In this paper, we will focus on \textbf{symmetric rank-metric codes} in $\Sym_q(m)$, i.e. subspaces of the vector space of symmetric matrices over $\mathbb{F}_q$ of order $m$ equipped with the rank distance. The rank distance between two matrices $A$ and $B$ in $\Sym_q(m)$ is defined as 
\[ d(A,B)=\mathrm{rk}(A-B). \]
The minimum distance of a symmetric rank-metric code $C$ is defined as
\[d(C)=\min\{d(A,B) \colon A,B \in C, A\ne B\}.\]
The parameters of a symmetric rank-metric code in $\Sym_q(m)$ are $(m,|C|,d(C))$ and they are related by the following Singleton-like bound.

\begin{theorem}[\text{\cite[Theorem 3.3]{schmidt2015symmetric}}]\label{thm:SingletonBound}
Let $C$ be a symmetric rank-metric code in $\Sym_q(m)$ with minimum distance $d$ then we have
 \[\dim(C)\leq  \begin{cases}
 \frac{m(m-d+2)}{2} &\text{if } m-d \text{ is even},\\
 \frac{(m+1)(m-d+1)}{2} &\text{if } m-d \text{ is odd}.\\
  \end{cases}
  \]
\end{theorem}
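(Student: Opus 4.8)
The plan is to prove the contrapositive in the following sharp form: any $\mathbb{F}_q$-subspace $C\subseteq \Sym_q(m)$ whose dimension exceeds the claimed bound must contain a nonzero matrix of rank at most $d-1$, which is incompatible with $d(C)=d$. Writing $c=\binom{m+1}{2}-\dim C$ for the codimension, I would realize $C$ as the common kernel of $c$ linear functionals $\varphi_1,\dots,\varphi_c$ on $\Sym_q(m)$, and then hunt for a nonzero symmetric matrix of rank $\le d-1$ annihilated by all the $\varphi_i$.

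The key device is to parametrize low-rank symmetric matrices by a homogeneous quadratic map and invoke a Chevalley--Warning-type argument. For $V\in\mathbb{F}_q^{m\times k}$ the matrix $VV^{\mathsf T}$ is symmetric of rank at most $k$, and each of its entries is a quadratic form in the $mk$ entries of $V$; hence each $\varphi_i(VV^{\mathsf T})$ is a quadratic form in $mk$ variables. Taking $k=d-1$ and feeding the homogeneous system $\varphi_1(VV^{\mathsf T})=\dots=\varphi_c(VV^{\mathsf T})=0$ into the Chevalley--Warning theorem, a nonzero common solution $V$ exists as soon as $2c<m(d-1)$, since the origin is always a solution and the number of solutions is divisible by the characteristic. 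This forces $2c\ge m(d-1)$ for a code of minimum distance $d$, i.e. $\dim C\le \binom{m+1}{2}-\tfrac{m(d-1)}{2}=\tfrac{m(m-d+2)}{2}$, which is exactly the bound in the case $m-d$ even.

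Two points then have to be settled. First, a nonzero $V$ must yield a genuinely nonzero $VV^{\mathsf T}\in C$ of rank $\le d-1$: over $\mathbb{F}_q$ one can have $VV^{\mathsf T}=0$ with $V\neq 0$ because of isotropy, and in characteristic $2$ the map $V\mapsto VV^{\mathsf T}$ does not even surject onto the symmetric matrices, so the parametrization must be replaced by one adapted to the congruence classes of symmetric forms (for instance maps of shape $AB^{\mathsf T}+BA^{\mathsf T}$ together with a diagonal correction). Second, and this is the origin of the two cases in the statement, obtaining the sharp constant when $m-d$ is odd requires a quadratic parametrization in $(m+1)(d-1)$ rather than $m(d-1)$ variables: the crude $VV^{\mathsf T}$ bound overshoots $\tfrac{(m+1)(m-d+1)}{2}$ by $\tfrac{d-2}{2}$. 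The extra variable is precisely the passage between the two types of orthogonal geometry over $\mathbb{F}_q$, so the clean way to organize the count is through the classification of symmetric bilinear/quadratic forms over finite fields.

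I expect the main obstacle to be exactly this last step: exhibiting a low-degree parametrization whose number of variables equals $\binom{m+1}{2}$ minus the claimed bound in \emph{both} parity cases, while guaranteeing that the Chevalley--Warning solution is nondegenerate. A more elementary alternative that sidesteps the nondegeneracy subtleties is puncturing with induction on $m$: deleting the last row and column maps $C$ into $\Sym_q(m-1)$, the shortened code $C_0=\{M\in C:\ \text{last row and column }=0\}$ embeds isometrically into $\Sym_q(m-1)$ with minimum distance $\ge d$, and $\dim C=\dim C_0+\dim(C/C_0)$. The case $m-d$ even then closes immediately by induction using the trivial estimate $\dim(C/C_0)\le m$ on the space of admissible last columns, but the case $m-d$ odd needs an additional saving of $d-1$ on that space, and supplying this saving is the same orthogonal-geometry input identified above.
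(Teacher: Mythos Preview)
The paper does not prove this theorem: it is quoted verbatim from Schmidt \cite[Theorem~3.3]{schmidt2015symmetric} and used as a black box, so there is no in-paper argument to compare against. Schmidt's own proof proceeds through the association scheme on $\Sym_q(m)$ and the linear-programming machinery (inner distributions and $Q$-polynomials), which is a completely different route from either of your suggestions.

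Your proposal is an honest plan rather than a proof, and the obstacles you flag are exactly the places where the argument currently breaks. For the Chevalley--Warning route, the isotropy issue is not a technicality: over $\mathbb{F}_q$ the variety $\{V\in\mathbb{F}_q^{m\times(d-1)}:VV^{\mathsf T}=0\}$ is already large (its size is governed by the number of totally isotropic $(d-1)$-frames for the standard form), so the extra solution guaranteed by Chevalley--Warning may well land there and give you nothing in $C$. Replacing $VV^{\mathsf T}$ by $AB^{\mathsf T}+BA^{\mathsf T}$ doubles the variable count and destroys the numerology you need; adding a diagonal correction makes the map cubic. I do not see a quadratic parametrization with the right variable count that also has trivial fibre over $0$, and without that the method stalls.

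The puncturing alternative fares better: your computation for $m-d$ even is correct and gives that case cleanly by induction (using the odd-parity bound at level $m-1$ and the trivial estimate $\dim(C/C_0)\le m$). But the odd-parity case is precisely where the theorem has content beyond the naive projection bound, and the ``saving of $d-1$'' you need on the space of last columns is equivalent in difficulty to what you are trying to prove. Neither of your two sketches supplies this step, so as it stands the proposal proves only the $m-d$ even half.
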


The above bounds turn out to be sharp, as proved in \cite{schmidt2015symmetric}, and this allows us to give the following definition.

\begin{definition}
    Let $C$ be a symmetric rank-metric code in $\Sym_q(m)$. We say that $C$ is a \textbf{symmetric Maximum Rank Distance} (or shortly a \textbf{symmetric MRD}) code if its parameters satisfy the equality in Theorem \ref{thm:SingletonBound}.
\end{definition}

Following the classical arguments in the Hamming metric, in this paper we first prove a sphere-packing like bound and then we call perfect those codes satisfying the equality in this bound. Then we characterize the perfect symmetric rank-metric codes by proving that, despite the rank-metric (see \cite{loidreau2006properties}) and the alternating rank-metric cases (see \cite{abiad2024eigenvalue}), there exist nontrivial perfect codes. Indeed, apart from the entire space, we show that the only other family of perfect symmetric rank-metric codes is given by the symmetric MRD codes of odd order matrices and minimum distance three. Then we study the covering density of a symmetric rank-metric code $C$, which can be seen as a measure of how much the spheres centered in the codewords and of radius  $\lfloor (d(C)-1)/2\rfloor$ cover the ambient space. We give some bounds and then we define families of symmetric rank-metric codes that are quasi-perfect as those whose covering density goes to one. We characterize families of quasi-perfect codes, by proving that they exist for special sets of parameters.

\section{Bounds on sphere size}

Given a matrix $M \in \mathrm{Sym}_q(m)$, we define the \textbf{sphere} of radius $t\in \mathbb{N}_0$ centered in $M$ as
\[ S(M,t)=\{ N \in  \mathrm{Sym}_q(m) \colon \mathrm{rk}(M-N)=t\}, \]
and the \textbf{ball} of radius $t\in \mathbb{N}_0$ centered in $M$ as
\[ B(M,t)=\cup_{i=0}^t S(M,i). \]
It is easy to check that the size of a sphere and of a ball does not depend on the center and so we can denote the size of a sphere of radius $t$ by $S_t$ and the size of a ball of radius $t$ by $B_t$.
Also, note that 
\[ B_t=\sum_{i=0}^t S_i, \]
and $S_i$ represents the number of symmetric matrices of order $m$ having rank $i$ and $B_i$ represents the number of symmetric matrices of order $m$ having rank at most $i$.
The values of $S_t$ and $B_t$ depend on $q$, $m$ and $t$, more precisely the following hold (see also \cite[Theorem 2]{macwilliams1969orthogonal}).

\begin{theorem}[\text{\cite[Theorem 3]{carlitz1954sym}}] \label{lem:rkssym}
Let $0 \le t \le m$ be an integer. We have 
\[S_t=|\{M \in \Sym_m(q) \mid \mathrm{rk}(M)=t\}| = \prod_{s=1}^{\lfloor t/2 \rfloor} \frac{q^{2s}}{q^{2s}-1} \, \prod_{s=0}^{t-1}\left( q^{m-s}-1 \right).\]
\end{theorem}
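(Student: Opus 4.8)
The plan is to prove the formula by induction on the order $m$, via a bordering (rank-recursion) argument. Write $f(m,t)$ for the number of rank-$t$ matrices in $\Sym_q(m)$, so that $S_t=f(m,t)$ for the ambient order $m$, and set $P_t=\prod_{s=1}^{\lfloor t/2\rfloor}\frac{q^{2s}}{q^{2s}-1}$ and $Q_t(m)=\prod_{s=0}^{t-1}(q^{m-s}-1)$, so the claim is $f(m,t)=P_t\,Q_t(m)$. Conceptually the count splits as the number of choices for the radical $\ker M$ (a subspace of codimension $t$) times the number of nondegenerate symmetric forms induced on the quotient; the bordering recursion below makes this quantitative and is what I would actually carry out.

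To set up the recursion I would write an arbitrary $M\in\Sym_q(m)$ in block form
\[ M=\begin{pmatrix} A & b\\ b^{T} & c\end{pmatrix},\qquad A\in\Sym_q(m-1),\ b\in\mathbb{F}_q^{m-1},\ c\in\mathbb{F}_q, \]
and analyse $\mathrm{rk}(M)$ in terms of $\mathrm{rk}(A)$. If $b\in\mathrm{col}(A)$, say $b=Au$, then conjugating by the unipotent congruence $P=\left(\begin{smallmatrix} I & 0\\ -u^{T} & 1\end{smallmatrix}\right)$ gives $PMP^{T}=\mathrm{diag}(A,\,c-u^{T}Au)$, with $c-u^{T}Au$ well defined (independent of the choice of $u$, since $A$ is symmetric). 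Hence the rank equals $\mathrm{rk}(A)$ when $c-u^{T}Au=0$ and $\mathrm{rk}(A)+1$ otherwise; if instead $b\notin\mathrm{col}(A)$ the last row and column are independent of the previous ones and the rank jumps to $\mathrm{rk}(A)+2$ for every $c$. Counting the admissible $(b,c)$ in each case for a fixed $A$ of rank $r$ (there are $q^{r}$ vectors $b\in\mathrm{col}(A)$, of which exactly one value of $c$ preserves the rank while $q-1$ raise it by one, and $q^{m-1}-q^{r}$ vectors $b\notin\mathrm{col}(A)$ with $c$ free) yields
\[ f(m,t)=q^{t}f(m-1,t)+q^{t-1}(q-1)f(m-1,t-1)+q\,(q^{m-1}-q^{t-2})f(m-1,t-2). \]

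With the base case $m=1$ (where $f(1,0)=1$ and $f(1,1)=q-1$ match the formula) and the convention $f(m,t)=0$ for $t<0$, it then remains to verify that $P_tQ_t(m)$ solves this recursion. Using $Q_t(m)=(q^{m}-1)\,Q_{t-1}(m-1)$ together with the analogous one-line relations expressing $Q_t(m-1)$ and $Q_{t-2}(m-1)$ through $Q_{t-1}(m-1)$, one divides through by $Q_{t-1}(m-1)$ and is left with an identity purely in the $P$'s and powers of $q$; the cancellation $q^{t-1}(q-1)+q^{t-1}=q^{t}$ (and its odd-case analogue) makes both sides reduce to $P_t(q^{m}-1)$. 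The one point of care is that $P_t$ behaves differently according to the parity of $t$, namely $P_{t}=P_{t-1}$ when $t$ is odd and $P_{t-1}=P_{t-2}$ when $t$ is even, so the verification splits cleanly into two cases.

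I expect the main obstacle to be precisely this bordering rank dichotomy rather than the algebra: the delicate step is justifying rigorously that $b\notin\mathrm{col}(A)$ forces a rank jump of exactly two (equivalently, that the bordered form acquires a nondegenerate hyperbolic pair), and that $c-u^{T}Au$ is independent of the representative $u$. Once these are pinned down, together with the parity bookkeeping for $P_t$, the inductive verification telescopes.
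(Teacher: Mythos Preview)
The paper does not supply its own proof of this statement: it is quoted as a known result of Carlitz (with a parallel reference to MacWilliams), so there is nothing in the paper to compare your argument against.

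That said, your proposal is a correct and complete strategy. The bordering recursion
\[
f(m,t)=q^{t}f(m-1,t)+q^{t-1}(q-1)f(m-1,t-1)+q\bigl(q^{m-1}-q^{t-2}\bigr)f(m-1,t-2)
\]
is derived correctly from the rank dichotomy, and the closed form $P_tQ_t(m)$ does satisfy it: after using $Q_t(m)=(q^{m}-1)Q_{t-1}(m-1)$, $Q_t(m-1)=(q^{m-t}-1)Q_{t-1}(m-1)$, and $q(q^{m-1}-q^{t-2})=q^{t-1}(q^{m-t+1}-1)$ to cancel the factor $Q_{t-2}(m-1)$, the identity to check is
\[
P_t(q^{m}-1)=q^{t}P_t(q^{m-t}-1)+q^{t-1}(q-1)P_{t-1}+q^{t-1}P_{t-2},
\]
which telescopes in each parity case exactly as you indicate. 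The two ``delicate'' points are also fine: if $Au=Au'=b$ then $u'-u\in\ker A$, so $u'^{T}Au'=u^{T}Au$ and $c-u^{T}Au$ is well defined; and when $b\notin\mathrm{col}(A)=(\ker A)^{\perp}$ one can choose $v\in\ker A$ with $v^{T}b=1$, whence $(v,0)^{T}$ and $e_m$ form a hyperbolic pair for $M$ lying outside the radical of the restriction of $M$ to $\mathbb{F}_q^{m-1}\times\{0\}$, forcing $\mathrm{rk}(M)=\mathrm{rk}(A)+2$ for every $c$. This bordering argument is essentially the one in the MacWilliams reference the paper cites alongside Carlitz.
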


As a consequence of the above theorem, we obtain that
\[ B_t=\sum_{i=0}^t \left(\prod_{s=1}^{\lfloor i/2 \rfloor} \frac{q^{2s}}{q^{2s}-1} \, \prod_{s=0}^{i-1}\left( q^{m-s}-1 \right)\right). \]

We will now provide some upper and lower bounds on the size of a sphere and of a ball. 

\begin{proposition}\label{prop:boundvolume}
For any $i \in \{0,\ldots,m\}$ and for any $q$, we have
\[q^{(m-1)i-\frac{i(i-1)}{2}}< S_i\leq q^{mi-\frac{i(i-1)}{2}+\lfloor \frac{i}{2}\rfloor},\]
and
\[q^{(m-1)i-\frac{i(i-1)}{2}}\leq B_i\leq q^{mi-\frac{i(i-1)}{2}+\lfloor \frac{i}{2}\rfloor+1}.\]
\end{proposition}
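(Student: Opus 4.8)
The plan is to work directly from the exact count in Theorem \ref{lem:rkssym}, writing $S_i = P_1(i)\,P_2(i)$ with $P_1(i)=\prod_{s=1}^{\lfloor i/2\rfloor}\frac{q^{2s}}{q^{2s}-1}$ and $P_2(i)=\prod_{s=0}^{i-1}(q^{m-s}-1)$, and then bounding the two factors separately. For the second factor I would use the elementary two-sided estimate $q^{m-s-1}\le q^{m-s}-1<q^{m-s}$, valid for $0\le s\le i-1\le m-1$ and $q\ge 2$; taking the product over $s$ and using $\sum_{s=0}^{i-1}(m-s)=mi-\tfrac{i(i-1)}{2}$ yields $q^{(m-1)i-i(i-1)/2}\le P_2(i)<q^{mi-i(i-1)/2}$.

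For the first factor, each term $\frac{q^{2s}}{q^{2s}-1}$ exceeds $1$, so $P_1(i)\ge 1$ with strict inequality once $i\ge 2$, while the one-line inequality $\frac{q^{2s}}{q^{2s}-1}\le q$ (equivalent to $q^{2s}(q-1)\ge q$, true for $s\ge 1$ and $q\ge 2$) gives $P_1(i)\le q^{\lfloor i/2\rfloor}$. Multiplying the two pairs of bounds then produces $q^{(m-1)i-i(i-1)/2}<S_i\le q^{mi-i(i-1)/2+\lfloor i/2\rfloor}$, the strictness on the left coming from $P_1(i)>1$ (for $i\ge2$) together with the factorwise lower bound on $P_2(i)$.

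For the ball, the lower bound is immediate, since $B_i\ge S_i$ and the lower bound on $S_i$ was just established. The interesting estimate is the upper bound on $B_i=\sum_{j=0}^{i} S_j$. Writing $U_j=q^{mj-j(j-1)/2+\lfloor j/2\rfloor}$ for the proved upper bound on $S_j$ and $f(j)$ for its exponent, I would show that these majorants decay geometrically as $j$ decreases: a direct computation gives $f(j)-f(j-1)=m-(j-1)+\big(\lfloor j/2\rfloor-\lfloor(j-1)/2\rfloor\big)$, and since the last bracket is $0$ or $1$ and $m-(j-1)\ge 1$ for every $j\le m$, we obtain $f(j)-f(j-1)\ge 1$, i.e. $U_{j-1}\le U_j/q$. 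Summing the resulting geometric majorant gives $B_i\le\sum_{j=0}^{i} U_j\le U_i\sum_{k\ge 0}q^{-k}=\tfrac{q}{q-1}U_i\le q\,U_i$, the last step using $q\ge 2$, and this is exactly $q^{mi-i(i-1)/2+\lfloor i/2\rfloor+1}$.

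The main obstacle is the upper bound on $B_i$: the naive estimate $B_i\le (i+1)S_i$ is too weak, and the single extra factor of $q$ in the target exponent is recovered only by exploiting the geometric decay of the summands. The crux is therefore the monotonicity inequality $f(j)-f(j-1)\ge 1$ for $j\le m$, which handles the floor term and collapses the sum into a convergent geometric series; the remainder is the routine factorwise analysis of the product formula.
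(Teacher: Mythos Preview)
Your argument is correct and, for the $S_i$ bounds and the $B_i$ lower bound, identical to the paper's. The only point of departure is the $B_i$ upper bound: the paper also sums the majorants $U_j$ and factors out $U_i$, but then bounds the exponent difference by $(i-s)(i-m)$ rather than by the consecutive-step estimate $f(j)-f(j-1)\ge 1$ that you use. This gives the paper a geometric series with ratio $q^{i-m}$ instead of your ratio $q^{-1}$; both collapse to the single extra factor $q$ via $\frac{1}{1-r}\le q$. Your version is slightly cleaner and sidesteps the degenerate case $i=m$, where the paper's ratio becomes $1$ and the geometric-sum identity it invokes is not directly applicable (the inequality $B_m\le q^{m(m+1)/2+\lfloor m/2\rfloor+1}$ is of course still true, since $B_m=|\Sym_q(m)|=q^{m(m+1)/2}$). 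One small caveat: your justification of the strict lower inequality on $S_i$ only covers $i\ge 2$; for $i=0$ one has $S_0=1$ and the inequality is an equality, so the statement itself is slightly imprecise at that endpoint---the paper handles $i=0$ by fiat and does not comment on this.
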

\begin{proof}
The assertion is clearly true when $i=0$. So, assume that $i\geq 1$ and let us start by determining the bounds on 
 \[S_{i}=\prod^{\lfloor i/2 \rfloor}_{s=1}\frac{q^{2s}}{q^{2s}-1}\prod^{i-1}_{s=0}(q^{m-s}-1).\]
Since     
\[\prod^{i-1}_{s=0}(q^{m-s}-1)\leq\prod^{i-1}_{s=0}q^{m-s}= q^{mi-\frac{i(i-1)}{2}}\]
and \[\frac{q^{2s}}{q^{2s}-1}\leq q,\]
it is easy to see that
\[\prod^{\lfloor \frac{i}{2} \rfloor}_{s=1}\frac{q^{2s}}{q^{2s}-1}\leq \prod^{\lfloor \frac{i}{2} \rfloor}_{s=1}q=q^{\lfloor\frac{i}{2}\rfloor}.\]
So, 
\begin{equation}\label{eq:1}
S_{i}\leq q^{\lfloor \frac{i}{2}\rfloor}\cdot q^{mi-\frac{i(i-1)}{2}}=q^{
mi-\frac{i(i-1)}{2}+\lfloor \frac{i}{2} \rfloor}.
\end{equation}
For lower bound on $S_i$, observe that
\[q^{m-s}-1\geq q^{m-s-1},\]
from which
 \begin{equation*}
  \prod^{i-1}_{s=0}(q^{m-s}-1) \geq \prod^{i-1}_{s=0}q^{m-s-1}=q^{(m-1)i-\frac{i(i-1)}{2}}.
 \end{equation*}
 Using that,
 \[\frac{q^{2s}}{q^{2s}-1}> 1 \,\,\,\text{and}\,\,\, \prod^{\lfloor i/2\rfloor}_{s=1}\frac{q^{2s}}{q^{2s}-1}> 1,\]
 we get
 \begin{equation}\label{eq:2}
 S_{i}=\prod^{\lfloor \frac{i}{2}\rfloor}_{s=1}\frac{q^{2s}}{q^{2s}-1}\prod^{i-1}_{s=0}(q^{m-s}-1)> 1\cdot q^{(m-1)i-\frac{i(i-1)}{2}}=q^{
(m-1)i-\frac{i(i-1)}{2}}
\end{equation}
 From \eqref{eq:1} and \eqref{eq:2} we obtain the desired bounds
 \[q^{(m-1)i-\frac{i(i-1)}{2}}< S_i\leq q^{mi-\frac{i(i-1)}{2}+\lfloor \frac{i}{2}\rfloor}.\]
Let us know analyze the size of the balls. Since $B_i=\sum_{s=0}^i S_s \geq S_i$, the lower bound on $B_i$ comes directly from the lower bound on the size of sphere of radius $i$.
So, we only need to prove the upper bound. Because of the upper bound on $S_i$, we have
\begin{align*}
B_{i}&\leq \sum^{i}_{s=0}q^{ms-\frac{s(s-1)}{2}+\lfloor\frac{s}{2}\rfloor}&\\
&=q^{mi-\frac{i(i-1)}{2}+\lfloor\frac{i}{2}\rfloor}\left(1+\sum^{i-1}_{s=0}\frac{q^{ms-\frac{s(s-1)}{2}+\lfloor\frac{s}{2}\rfloor}}{q^{mi-\frac{i(i-1)}{2}+\lfloor\frac{i}{2}\rfloor}}\right)&\\
&=q^{mi-\frac{i(i-1)}{2}+\lfloor\frac{i}{2}\rfloor}\left(1+\sum^{i-1}_{s=0}q^{ms-mi+\frac{i^2-s^2}{2}-\frac{i-s}{2}+\lfloor\frac{s}{2}\rfloor-\lfloor\frac{i}{2}\rfloor}\right).
\end{align*}
As, for every $s\leq i-1$,
\begin{equation*}
    ms-mi+\frac{i^2-s^2}{2}-\frac{i-s}{2}+\left\lfloor\frac{s}{2}\right\rfloor-\left\lfloor\frac{i}{2}\right\rfloor
    \leq 
    (i-s)\left(i-m\right),
\end{equation*}
since
\[ \left\lfloor\frac{s}{2}\right\rfloor\leq \left\lfloor \frac{i}{2}\right\rfloor\,\,\, \text{and}\,\,\, s\leq i,\]
we have that
\begin{equation}\label{eq:boundBi}
B_{i}\leq q^{mi-\frac{i(i-1)}{2}+\lfloor\frac{i}{2}\rfloor}\left(1+\sum^{i-1}_{s=0}q^{(i-s)(i-m)}\right)\leq q^{mi-\frac{i(i-1)}{2}+\lfloor\frac{i}{2}\rfloor}\left(1+\sum^{i}_{j=1}q^{j(i-m)}\right).
\end{equation}
Observe that
\[1+\sum^{i}_{j=1}q^{j(i-m)}=\frac{1-q^{(i+1)(i-m)}}{1-q^{i-m}}\leq \frac{1}{1-q^{i-m}}\leq q,\]
so that, together with \eqref{eq:boundBi}, we have
\[B_{i}\leq q^{mi-\frac{i(i-1)}{2}+\lfloor\frac{i}{2}\rfloor}\cdot q,\]
which concludes the proof.
\end{proof}

\begin{remark}
    The above bounds can be certainly improved. Because of our purposes of the next section, we need to upper bound $B_i$ by a power of $q$.
\end{remark}

In the next section we will use these bounds to characterize perfect codes in the theory of symmetric rank-metric codes.

\section{Perfect symmetric rank-metric codes}

In this section we will explore the sphere-packing bound and we will investigate the parameters for which perfect codes exist in the symmetric matrices framework. 

\begin{theorem}[Sphere-packing bound]
Let $C\subseteq \Sym_{q}(m)$ be a symmetric rank metric code of minimum distance $d$ and $|C|=M$. If $t=\lfloor\frac{d-1}{2}\rfloor$, then 
\[MB_{t}\leq q^{\frac{m^2+m}{2}}.\]
\end{theorem}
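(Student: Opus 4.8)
The plan is to mimic the classical sphere-packing argument from Hamming-metric coding theory, now carried out inside the ambient space $\Sym_q(m)$ equipped with the rank distance. The total number of symmetric matrices of order $m$ over $\mathbb{F}_q$ is $q^{(m^2+m)/2}$, since a symmetric matrix is determined freely by its entries on and above the diagonal, of which there are $m + \binom{m}{2} = (m^2+m)/2$. This count will be the right-hand side of the bound, so the whole task reduces to showing that the balls $B(c,t)$ centered at the $M$ codewords, with $t = \lfloor (d-1)/2\rfloor$, are pairwise disjoint and each has size $B_t$.

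First I would verify disjointness of the balls. Suppose $N \in B(c_1,t) \cap B(c_2,t)$ for two distinct codewords $c_1, c_2 \in C$. Then $\mathrm{rk}(N - c_1) \le t$ and $\mathrm{rk}(N - c_2) \le t$. The key observation is that the rank distance is a genuine metric on $\Sym_q(m)$: it is symmetric and, crucially, satisfies the triangle inequality, because $\mathrm{rk}(A - B) = \mathrm{rk}((A - N) + (N - B)) \le \mathrm{rk}(A - N) + \mathrm{rk}(N - B)$ by subadditivity of rank, and the difference of two symmetric matrices is again symmetric, so we never leave the space. Applying this gives
\[
d(c_1, c_2) = \mathrm{rk}(c_1 - c_2) \le \mathrm{rk}(c_1 - N) + \mathrm{rk}(N - c_2) \le 2t \le d - 1,
\]
where the last inequality uses $t = \lfloor (d-1)/2\rfloor \le (d-1)/2$. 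This contradicts $d(C) = d$, which forces $d(c_1, c_2) \ge d$. Hence the balls are pairwise disjoint.

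Next I would observe, as already recorded in the excerpt, that $|B(c,t)| = B_t$ is independent of the center. Therefore the union of the $M$ disjoint balls has exactly $M B_t$ elements, and since this union is contained in $\Sym_q(m)$, we conclude
\[
M B_t = \left| \bigcup_{c \in C} B(c,t) \right| \le |\Sym_q(m)| = q^{\frac{m^2+m}{2}},
\]
which is the claimed inequality.

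I do not expect a genuine obstacle here; the argument is structurally routine once one confirms the triangle inequality holds within the symmetric setting. The only point demanding a small amount of care is precisely that subtraction of symmetric matrices stays inside $\Sym_q(m)$, so that the metric-space structure and the center-independence of $B_t$ both apply verbatim. Everything else is the standard volume-packing count, with the cardinality of the ambient space supplied by the dimension formula $(m^2+m)/2$ and the ball size supplied by the expression for $B_t$ derived earlier from Theorem \ref{lem:rkssym}.
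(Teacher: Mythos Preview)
Your proposal is correct and follows essentially the same approach as the paper: disjointness of the balls $B(c,t)$ via the triangle inequality for the rank metric, center-independence of $|B(c,t)|=B_t$, and then the packing count inside $\Sym_q(m)$ of size $q^{(m^2+m)/2}$. If anything, you supply slightly more detail (the subadditivity of rank and closure of $\Sym_q(m)$ under subtraction) than the paper does.
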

\begin{proof}
    Let $c_{1},c_{2},...,c_{M}$ be the codewords of $C$. Using the triangle inequality of the rank metric, we can see that 
    \[B_{t}(c_{i})\cap B_{t}(c_{j})=\varnothing,\] for any $i,j \in \{1,\ldots,M\}$ with $i\neq j$.
    So,
    \[\left|\bigcup^{M}_{i=1}B_{t}(c_{i})\right|=\bigcup^{M}_{i=1}\left| B_{t}(c_{i})\right|\leq |\Sym_{q}(m)|=q^{\frac{m^2+m}{2}}\]
    implying that
    \[\sum^{M}_{i=1}B_{t}\leq q^{\frac{m^2+m}{2}},\]
    and hence the desired bound. 
\end{proof}

As for the already studied metrics, we give the following definition of perfect code.

\begin{definition}
    Let $C$ be a symmetric rank-metric code in $\Sym_q(m)$. We say that $C$ is a \textbf{perfect} if its parameters satisfy the equality in the sphere-packing bound.
\end{definition}

A trivial example is obtained when considering $C=\Sym_q(m)$, indeed in this case its minimum distance is $1$ and so $t=(d-1)/2=0$ and $B_0=1$.
We will now show, through some steps, that a code in $\Sym_q(m)$ with minimum distance $d$ is perfect if and only if either it is the trivial code (i.e. $\Sym_q(m)$) or it is a symmetric MRD code, with $m$ odd and $d=3$.
We first analyze the case in which $\lfloor\frac{d-1}{2}\rfloor=1$.

\begin{lemma}\label{lem:t=1}
    Let $C\subseteq \Sym_{q}(m)$  be a perfect code with minimum distance $d$ and let $t=\lfloor \frac{d-1}{2}\rfloor=1$. Then  $d=3$, $m$ is odd and $C$ is a symmetric MRD code.
\end{lemma}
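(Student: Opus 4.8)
The plan is to pin down the cardinality of $C$ exactly from the perfect condition, and then play this off against the Singleton-like bound of Theorem \ref{thm:SingletonBound}. First I would evaluate $B_1$ explicitly. By Theorem \ref{lem:rkssym} the number of symmetric matrices of rank exactly one is $S_1 = q^m - 1$, so $B_1 = S_0 + S_1 = q^m$. Since $C$ is a subspace we have $|C| = q^{\dim C}$, and the equality in the sphere-packing bound reads $q^{\dim C} \cdot q^m = q^{(m^2+m)/2}$; hence $\dim C = \frac{m^2+m}{2} - m = \frac{m(m-1)}{2}$.

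Next I would use that $t = \lfloor (d-1)/2 \rfloor = 1$ forces $d \in \{3,4\}$, and test each value against the bound. For $d = 4$ the bound of Theorem \ref{thm:SingletonBound} gives $\dim C \le \frac{m(m-2)}{2}$ when $m$ is even (so $m-d$ even) and $\dim C \le \frac{(m+1)(m-3)}{2}$ when $m$ is odd (so $m-d$ odd); in both cases the right-hand side is strictly smaller than $\frac{m(m-1)}{2}$, contradicting the value of $\dim C$ computed above. This rules out $d = 4$, leaving $d = 3$.

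For $d = 3$ I would again split on the parity of $m$. If $m$ is even, then $m-d$ is odd and the bound reads $\dim C \le \frac{(m+1)(m-2)}{2} = \frac{m(m-1)}{2} - 1$, again contradicting $\dim C = \frac{m(m-1)}{2}$. Hence $m$ must be odd, in which case $m-d$ is even and the bound becomes $\dim C \le \frac{m(m-1)}{2}$, which is met with equality. By the definition of symmetric MRD code this equality shows that $C$ is a symmetric MRD code, completing the argument.

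I do not expect a genuine obstacle here: once $B_1$ is evaluated, the whole lemma reduces to a finite comparison of the forced dimension $\frac{m(m-1)}{2}$ against the four instances of the Singleton-like bound. The only points requiring care are the bookkeeping with the floor functions and the two parity branches of Theorem \ref{thm:SingletonBound}, and the observation that $C$ being a subspace lets us convert $|C|$ into a power of $q$, so that the perfect equality determines $\dim C$ exactly rather than merely bounding it.
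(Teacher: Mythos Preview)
Your proof is correct and follows essentially the same approach as the paper: compute $B_1=q^m$, deduce $|C|=q^{m(m-1)/2}$ from the perfect condition, and eliminate the four parity cases $(d,m)\in\{3,4\}\times\{\text{even},\text{odd}\}$ against the Singleton-like bound of Theorem~\ref{thm:SingletonBound}. The only cosmetic difference is that the paper organizes the case analysis by the parity of $m-d$ rather than by the value of $d$, but the content is identical.
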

\begin{proof}
Note that in this case, $d \in \{3,4\}$.
Since $t=1$ and $B_{1}=q^{m}$, we have that
\[M=q^{\frac{m^2-m}{2}}.\]
When $m-d$ is even, Theorem \ref{thm:SingletonBound} implies
\[q^{\frac{m^2-m}{2}}=M\leq q^{\frac{m(m-d+2)}{2}},\]
that is
\[d\leq 3,\]
and so $d=3$ and $m$ is odd. When $d=3$ and $m$ is odd, we have the equality in the sphere-packing bound if and only if $C$ is a symmetric MRD code. 
When $m-d$ is odd, Theorem \ref{thm:SingletonBound} implies
 \[q^{\frac{m^2-m}{2}}=M\leq q^{\frac{(m+1)(m-d+1)}{2}},\]
that is
\begin{equation*}
  d(m+1)\leq 3m+1,   
\end{equation*}
a contradiction to $d \in \{3,4\}$.
\end{proof}

In the next lemma we will analyze the case in which $\lfloor \frac{d-1}{2}\rfloor=2$.

\begin{lemma}\label{lem:t=2}
    Let $C\subseteq \Sym_{q}(m)$  be a symmetric rank-metric code with minimum distance $d$ and let $t=\lfloor\frac{d-1}{2}\rfloor=2$ then $C$ is not perfect.
\end{lemma}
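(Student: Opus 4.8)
The plan is to assume, for contradiction, that a perfect code $C$ with $t=\lfloor (d-1)/2 \rfloor = 2$ exists, and derive a numerical contradiction from the sphere-packing equality combined with the Singleton-like bound of Theorem \ref{thm:SingletonBound}. Since $t=2$ forces $d \in \{5,6\}$, the perfectness condition $M B_2 = q^{(m^2+m)/2}$ gives an exact expression for $M=|C|$, namely $M = q^{(m^2+m)/2}/B_2$. The first concrete step is to compute $B_2 = 1 + S_1 + S_2$ explicitly using Theorem \ref{lem:rkssym}: one finds $S_1 = q^m - 1$ and $S_2 = \frac{q^2}{q^2-1}(q^m-1)(q^{m-1}-1)$, so that $B_2$ is a rational expression in $q$ of dominant order roughly $q^{2m-1}$.

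The key obstacle, and the reason this case differs from Lemma \ref{lem:t=1}, is that $B_2$ is \emph{not} a pure power of $q$ (unlike $B_1 = q^m$), so the perfectness equality cannot be manipulated purely in exponents. The approach I would take is therefore a divisibility or size argument rather than a clean exponent comparison. Concretely, I would use the bounds from Proposition \ref{prop:boundvolume} to sandwich $B_2$: the proposition gives $q^{2m-3} \le B_2 \le q^{2m-2+1} = q^{2m-1}$, hence $q^{(m^2+m)/2 - (2m-1)} \le M \le q^{(m^2+m)/2-(2m-3)}$, i.e. $M$ lies strictly between two explicit powers of $q$. I would then compare this admissible range for $\log_q M$ against the maximum dimension permitted by Theorem \ref{thm:SingletonBound} for $d=5$ and $d=6$ and show that the Singleton bound forces $\log_q M$ to be \emph{strictly smaller} than the lower end of the perfectness range, so no integer $M$ (equivalently, no $\mathbb{F}_q$-subspace of the required size) can simultaneously satisfy both constraints.

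More precisely, I would split into the two parities. When $m-d$ is even the Singleton bound gives $\dim C \le \frac{m(m-d+2)}{2}$; substituting $d=5$ or $d=6$ and comparing with $(m^2+m)/2 - \log_q B_2$ should yield an incompatibility for all valid $m$. When $m-d$ is odd one uses instead $\dim C \le \frac{(m+1)(m-d+1)}{2}$, and again the comparison is expected to fail. The delicate point is that because $B_2$ is not a power of $q$, the quantity $(m^2+m)/2 - \log_q B_2$ is not an integer, whereas $\dim C$ must be; I would exploit this together with the sharp inequalities of Proposition \ref{prop:boundvolume} to rule out every case. The main difficulty I anticipate is handling the boundary behavior for small $m$ (where the asymptotic bounds are loosest) and confirming that the non-integrality argument genuinely closes the gap rather than leaving a single borderline value of $m$; I would check those small cases directly against the exact formula for $B_2$ to complete the contradiction.
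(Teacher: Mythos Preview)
Your main line of attack---sandwiching $\log_q M$ using Proposition~\ref{prop:boundvolume} and then colliding with the Singleton bound---has a concrete gap. Proposition~\ref{prop:boundvolume} gives $B_2 \le q^{2m - 1 + \lfloor 2/2\rfloor + 1} = q^{2m+1}$, not $q^{2m-1}$ as you wrote. With the correct upper bound you only get $M \ge q^{(m^2-3m-2)/2}$, and for $d=5$ with $m$ odd the Singleton bound allows $M \le q^{(m^2-3m)/2}$, so there is no contradiction from size alone. You could sharpen the estimate on $B_2$ by hand, but the margin is thin and the argument becomes a case analysis rather than a clean inequality.

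The idea you mention almost in passing---that $B_2$ is not a power of $q$ while $\dim C$ must be an integer---is in fact the entire proof, and it needs no Singleton input at all. Since $C$ is an $\mathbb{F}_q$-subspace, $M=q^{\dim C}$; perfectness would then force $B_2 = q^{(m^2+m)/2 - \dim C}$ to be a pure power of $q$. The paper simply checks that this fails: from
\[
(q^2-1)B_2 = q^{2m+1} - q^{m+1} - q^m + q^2
\]
one sees for $m\ge 3$ that $q^2 \mid B_2$ and $B_2/q^2 \equiv -1 \pmod q$, so $q^2$ is the exact power of $q$ dividing $B_2$, while $B_2 > q^2$. (The case $m=2$ is vacuous since then $d\le 2$.) So drop the Singleton comparison entirely and promote your ``non-integrality'' remark to the main argument.
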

\begin{proof}
Suppose on contrary that $C$ is perfect, then
\[MB_{2}=q^{\frac{m^2+m}{2}}.\]
Note that $M=q^i$ for some natural number $i<(m^2+m)/2$, as $t=2$ and $C$ cannot be equal to $\Sym_{q}(m)$. Therefore, we get that
\begin{equation}\label{eq:B2i}
B_{2}= q^{\frac{m^2+m}{2}-i}.
\end{equation}
Since $B_{2}=q^{m}+\frac{q^2(q^m-1)(q^{m-1}-1)}{q^2-1}$ then if $m>2$ we have $B_2/q^2\equiv -1 \pmod{q}$, so we get a contradiction from \eqref{eq:B2i}.
When $m=2$ we still get a contradiction from \eqref{eq:B2i} and the size of $B_2$.
\end{proof}

We can now completely determine the parameters of a perfect code in $\Sym_q(m)$.

\begin{theorem}\label{thm:2.4}
    Let $C\subseteq \Sym_{q}(m)$. Then $C$ is a perfect code if and only if $C=\Sym_{q}(m)$ or $d(C)=3$, $m$ is odd and $C$ is a symmetric MRD code.
\end{theorem}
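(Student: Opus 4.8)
The statement is an ``if and only if'', so I would treat the two directions separately, putting essentially all the work into the ``only if'' part.

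For the ``if'' direction I would simply verify perfectness of the two claimed families by direct computation. If $C=\Sym_q(m)$ then $d(C)=1$, hence $t=\lfloor(d-1)/2\rfloor=0$ and $B_0=1$, so $MB_0=|\Sym_q(m)|=q^{(m^2+m)/2}$ and equality holds. If instead $m$ is odd, $d(C)=3$ and $C$ is symmetric MRD, then $t=1$, $B_1=q^m$, and since $m-d=m-3$ is even Theorem \ref{thm:SingletonBound} gives $\dim C=m(m-1)/2$; thus $MB_1=q^{(m^2-m)/2}\cdot q^m=q^{(m^2+m)/2}$, again equality.

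For the ``only if'' direction I would run a case analysis on the integer $t=\lfloor(d-1)/2\rfloor$, which is nonnegative and bounded by $\lfloor(m-1)/2\rfloor$. The case $t=0$ forces $M=q^{(m^2+m)/2}=|\Sym_q(m)|$, hence $C=\Sym_q(m)$. The case $t=1$ is exactly Lemma \ref{lem:t=1}, yielding $d=3$, $m$ odd and $C$ symmetric MRD. The case $t=2$ is excluded by Lemma \ref{lem:t=2}. It therefore remains to rule out every $t\geq 3$, and this is the one genuinely new step. Here I would argue by contradiction: perfectness rewrites as $B_t=q^{(m^2+m)/2-\dim C}$, so the Singleton bound $\dim C\leq\sigma$ of Theorem \ref{thm:SingletonBound} gives the lower bound $B_t\geq q^{(m^2+m)/2-\sigma}$, while Proposition \ref{prop:boundvolume} supplies the matching upper bound $B_t\leq q^{mt-t(t-1)/2+\lfloor t/2\rfloor+1}$. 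Comparing exponents yields
\[
\frac{m^2+m}{2}-\sigma \;\leq\; mt-\frac{t(t-1)}{2}+\Big\lfloor\frac{t}{2}\Big\rfloor+1,
\]
which I would show is violated for all $t\geq 3$. Since $d\in\{2t+1,2t+2\}$ and $\sigma$ depends on the parity of $m-d$, this means checking the four combinations of $d\in\{2t+1,2t+2\}$ against the parity of $m$.

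The binding case is $d=2t+1$ with $m$ odd, where $\sigma=m(m-2t+1)/2$, so $(m^2+m)/2-\sigma=mt$ and the displayed inequality reduces to $t(t-1)/2\leq\lfloor t/2\rfloor+1$, manifestly false for $t\geq 3$ since the left side grows quadratically and the right side linearly. In the other three combinations the quantity $(m^2+m)/2-\sigma$ is strictly larger (equivalently $\sigma$ is smaller), so the inequality fails even more decisively. The main obstacle is precisely this uniform treatment of $t\geq 3$: one must be sure that the crude exponent estimates of Proposition \ref{prop:boundvolume} remain sharp enough to beat the Singleton bound in every parity subcase at once, rather than testing finitely many $t$ by hand. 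The quadratic-versus-linear gap in the reduced inequality is what makes a single clean argument possible, and confirming that $d=2t+1$, $m$ odd is genuinely the worst case is the crux.
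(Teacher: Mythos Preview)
Your proposal is correct and follows essentially the same approach as the paper: both combine the Singleton bound of Theorem~\ref{thm:SingletonBound} with the ball upper bound of Proposition~\ref{prop:boundvolume} to force $t\in\{0,1,2\}$, then defer to Lemmas~\ref{lem:t=1} and~\ref{lem:t=2}. The only cosmetic differences are that you dispose of $t\in\{0,1,2\}$ first and rule out $t\geq 3$ afterward (the paper does the reverse), and that you make the ``if'' direction explicit where the paper leaves it implicit.
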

    \begin{proof}
        Suppose that $m-d$ is even,  then 
        from Theorem \ref{thm:SingletonBound} we have 
        \[M\leq q^{\frac{m(m-d+2)}{2}}, \]
        and using the fact that $C$ is perfect we have 
        \[q^{\frac{m(m-d+2)}{2}}B_{t}\geq q^{\frac{m^2+m}{2}}, \]
        with $t=\lfloor (d(C)-1)/2\rfloor$.
        Since, by Proposition \ref{prop:boundvolume} $B_{t}\leq q^{mt-\frac{t(t-1)}{2}+\frac{t}{2}+1}$, then 
        
        \[q^{\frac{m^2+m}{2}}\leq q^{mt-\frac{t(t-1)}{2}+\frac{t}{2}+1+\frac{m(m-d+2)}{2}},\]
        that is
        
        
         \begin{equation}\label{eq:condm2t+1noperf}
        \frac{m^2+m}{2}\leq mt-\frac{t(t-1)}{2}+\frac{t}{2}+1+\frac{m(m-d+2)}{2}.
        \end{equation}
        Assume that $d=2t+1$, then \eqref{eq:condm2t+1noperf} reads as follows
        \[0\leq -t^2+2t+2,\]
        which holds only for $t\in\{0,1,2\}$.\\
Now, consider $d=2t+2$, then \eqref{eq:condm2t+1noperf} implies that
\[m\leq -t^2+2t+2.\]
Since $m\geq 2$, again we have that $t\in \{0,1,2\}$.
Arguing as before in the case in which $m-d$ is odd, we have again that $t\in \{0,1,2\}$.
If $t=0$, then $d=1$ and $C$ is perfect if and only if $C= \Sym_{q}(m)$.
The remaining part follows by Lemmas \ref{lem:t=1} and \ref{lem:t=2}.
\end{proof}

\subsection{Examples of non-trivial perfect codes}

In this section we will recall the known examples of symmetric MRD codes. Their description is given in terms of linearized polynomials, which is an equivalent framework as we will see in the next lines. 
Let $\mathcal{L}_{m, q}$ denote the quotient $\mathbb{F}_q$-algebra of all \textbf{$q$-polynomials} (or \textbf{linearized polynomials}) over $\mathbb{F}_{q^m}$ with degree smaller than $q^m$, namely,
\[\mathcal{L}_{m,q}=\left\{ \sum\limits_{i=0}^{m-1} a_i x^{q^i} \colon a_i \in \mathbb{F}_{q^m} \right\}. \]
It is well known that the $\mathbb{F}_{q}$-algebra $\mathcal{L}_{m, q}$ is isomorphic to the $\mathbb{F}_{q}$-algebra $\mathbb{F}_q^{m\times m}$; see e.g. \cite{WU201379}.
In \cite{longobardi2020automorphism} symmetric matrices are described in terms of linearized polynomials. 
Indeed, the vector space $\Sym_q(m)$ can be identified to 
\[{\mathcal{S}}_q(m)=\left\{ \sum\limits_{i=0}^{m-1} c_i x^{q^{i}} \colon c_{m-i}=c_i^{q^{m-i}} \,\text{ for } \, i \in \{0,\ldots,m-1\} \right\}\subseteq \mathcal{L}_{m,q}. \]

In \cite[Theorem 4.4]{schmidt2015symmetric} the author proved that for all the admissible values of the minimum distance, $q$ and $m$ there are symmetric MRD codes (see also \cite{longobardi2020automorphism} for the description in terms of linearized polynomials).

\begin{theorem}[\text{\cite[Theorem 4.4]{schmidt2015symmetric}}]\label{Th-Schmidt-4.4}
	Let $m$ and $d$ be two positive integers such that $1\leq d\leq m$ and $m-d$ is even. 
	The set
	\begin{equation}\label{ex:E1s}
	\mathcal{S}_{q,m,d}=\left\{ b_0x + \sum\limits_{j=1}^{\frac{m-d}{2}} \left( b_jx^{q^j}+(b_jx)^{q^{m-j}}  \right) \colon b_0,\ldots,b_{\frac{m-d}{2}} \in \mathbb{F}_{q^m}\right\}
	\end{equation}
	defines a symmetric MRD code in $\mathrm{Sym}_q(m)$. 
\end{theorem}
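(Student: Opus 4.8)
The plan is to verify three things about $\mathcal{S}_{q,m,d}$: that it is contained in $\mathcal{S}_q(m)$ (so that it is a genuine symmetric rank-metric code), that its $\mathbb{F}_q$-dimension equals the Singleton-bound value $\tfrac{m(m-d+2)}{2}$, and that every nonzero element has rank at least $d$. Throughout I use the dictionary recalled above: under the isomorphism $\mathcal{L}_{m,q}\cong \mathbb{F}_q^{m\times m}$ a linearized polynomial $f$ is sent to the matrix of the $\mathbb{F}_q$-linear map $x\mapsto f(x)$ on $\mathbb{F}_{q^m}$, and this correspondence preserves rank, so $\mathrm{rk}(f)=m-\dim_{\mathbb{F}_q}\ker f$, where $\ker f=\{x\in\mathbb{F}_{q^m}\colon f(x)=0\}$. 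Write $r=(m-d)/2$, which is a non-negative integer since $m-d$ is even and $0\le d\le m$.

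For the first two claims, write a generic element of $\mathcal{S}_{q,m,d}$ as $f=\sum_{i=0}^{m-1}c_i x^{q^i}$ with $c_0=b_0$, with $c_j=b_j$ and $c_{m-j}=b_j^{q^{m-j}}$ for $1\le j\le r$, and all remaining $c_i=0$. A direct check shows $c_{m-i}=c_i^{q^{m-i}}$ for every $i$: it is automatic for $i=0$ and for the vanishing coefficients, while for $i=j$ it reads $c_{m-j}=b_j^{q^{m-j}}=c_j^{q^{m-j}}$, which also forces the companion identity at $i=m-j$. Hence $f\in\mathcal{S}_q(m)$. The parametrization $(b_0,\dots,b_r)\mapsto f$ is additive and $\mathbb{F}_q$-homogeneous because each $b\mapsto b^{q^{m-j}}$ is, and it is injective because the $b_j$ are read off from $c_0,\dots,c_r$; since the exponents $q^{0},q^{1},\dots,q^{r},q^{m-1},\dots,q^{m-r}$ are pairwise distinct (here $2r<m$, as $d\ge 1$), this is an $\mathbb{F}_q$-linear isomorphism onto $\mathcal{S}_{q,m,d}$. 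Therefore $\dim_{\mathbb{F}_q}\mathcal{S}_{q,m,d}=m(r+1)=\tfrac{m(m-d+2)}{2}$, matching Theorem \ref{thm:SingletonBound}.

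The heart of the argument, and the step I expect to be the main obstacle, is the rank lower bound for a nonzero $f\in\mathcal{S}_{q,m,d}$, equivalently $\dim_{\mathbb{F}_q}\ker f\le m-d$. The difficulty is that the $q$-degree of $f$ may be as large as $m-1$, so the naive root count is useless; I would instead apply a Frobenius twist that turns the \emph{balanced} polynomial $f$ into an ordinary $q$-polynomial of small $q$-degree with the same kernel. Set $h(x)=\big(f(x)\big)^{q^{r}}$. Since $y\mapsto y^{q^r}$ is a bijection of $\mathbb{F}_{q^m}$, one has $\ker h=\ker f$; moreover $h$ is linearized, $h(x)=\sum_i c_i^{q^r}x^{q^{i+r}}$. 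Reducing exponents modulo $m$ (legitimate as a function on $\mathbb{F}_{q^m}$, where $x^{q^m}=x$), the term $c_0 x$ moves to index $r$, the terms $b_j x^{q^j}$ move to indices $j+r\in\{r+1,\dots,2r\}$, and the terms $b_j^{q^{m-j}}x^{q^{m-j}}$ move to indices $m-j+r\equiv r-j\pmod m\in\{0,\dots,r-1\}$. These are exactly the pairwise distinct indices $\{0,1,\dots,2r\}$, so $h$ reduces to a $q$-polynomial of $q$-degree at most $2r=m-d$, and its reduced coefficients vanish simultaneously only when all $b_j$ vanish, so $h\neq 0$ whenever $f\neq 0$. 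A nonzero reduced $q$-polynomial of $q$-degree at most $k$ has at most $q^{k}$ roots in $\mathbb{F}_{q^m}$, whence $\dim_{\mathbb{F}_q}\ker f=\dim_{\mathbb{F}_q}\ker h\le m-d$ and $\mathrm{rk}(f)\ge d$.

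Consequently $d(\mathcal{S}_{q,m,d})\ge d$. Combining this with the dimension computed above, Theorem \ref{thm:SingletonBound} forces equality: a minimum distance strictly larger than $d$ would bound the dimension strictly below $\tfrac{m(m-d+2)}{2}$, contradicting the count, so $d(\mathcal{S}_{q,m,d})=d$ and the code attains the Singleton-like bound, i.e.\ it is a symmetric MRD code. Everything hinges on the observation of the third paragraph that the twist by $q^{r}$ collapses the two symmetric blocks of exponents into the single contiguous range $\{0,\dots,m-d\}$ without collisions, which is precisely why the code is built with half-width $r=(m-d)/2$.
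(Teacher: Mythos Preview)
The paper does not supply its own proof of this theorem; it is quoted from \cite[Theorem~4.4]{schmidt2015symmetric} (with the linearized-polynomial description taken from \cite{longobardi2020automorphism}), so there is no in-paper argument to compare against.

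Your proof is correct. The symmetry check $c_{m-i}=c_i^{q^{m-i}}$ and the dimension count $\dim_{\mathbb{F}_q}\mathcal{S}_{q,m,d}=m(r+1)=\tfrac{m(m-d+2)}{2}$ are routine and accurate, using that $2r=m-d<m$ guarantees the index sets $\{0,\dots,r\}$ and $\{m-r,\dots,m-1\}$ are disjoint. The decisive step---replacing $f$ by $h(x)=f(x)^{q^{r}}$ so that the support of the coefficient sequence is rotated from $\{0,\dots,r\}\cup\{m-r,\dots,m-1\}$ into the contiguous block $\{0,\dots,2r\}=\{0,\dots,m-d\}$---is exactly the right idea: it exhibits $\mathcal{S}_{q,m,d}$, up to a Frobenius twist, as a Gabidulin-type code of $q$-degree at most $m-d$, and the rank bound $\mathrm{rk}(f)\ge d$ then follows from the elementary fact that a nonzero $q$-polynomial of $q$-degree $\le k$ has at most $q^{k}$ roots. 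This is essentially the mechanism behind the original construction (Schmidt builds the code by applying the inverse twist to a Delsarte--Gabidulin code), so your argument recovers the intended proof. The concluding step, that a minimum distance strictly larger than $d$ would force the dimension below the Singleton value, is also valid once one checks both parity cases of Theorem~\ref{thm:SingletonBound}; you might make that parity check explicit, but the conclusion stands.
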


In \cite[Theorem 4.1]{schmidt2015symmetric} it has been shown that constructions of symmetric MRD codes with minimum distance $d$ in $\Sym_q(m)$ with $m-d$ odd can be obtained by puncturing (in terms of matrices, removing rows and columns) the above examples $\mathcal{S}_{q,m,d}$ of symmetric MRD codes.
When $d=3$ and $m$ is odd, $\mathcal{S}_{q,m,d}$ is a perfect code.
Moreover, for this families of codes fast encoding and decoding algorithms have been developed; see \cite{gabidulin2002representation}, \cite{gabidulin2003transposed}, \cite{gabidulin2004symmetric}, \cite{gabidulin2006symmetric} and \cite{kadir2022encoding}.
In \cite[Section 5]{longobardi2020automorphism}, also another construction of symmetric MRD codes has been found, but in this case the codes are not perfect.

\section{Covering density}

In the previous sections, we showed that existence of perfect codes in symmetric rank-metric context depends strongly on the order of matrices, the dimension of the code and minimum distance of code. 
However, a natural investigation regards the study of covering density of a symmetric rank-metric code $C$, which gives a measure of how much of the ambient space is covered by the spheres centered in the codewords and of radius $\lfloor \frac{d(C)-1}{2}\rfloor$.

\begin{definition}
    Let $C$ be a symmetric rank-metric code in $\Sym_q(m)$ of size $M$. The \textbf{covering density} of $C$ is defined as 
\[D(C)=\frac{MB_{t}}{q^{\frac{m^2+m}{2}}},\] where $t=\lfloor \frac{d(C)-1}{2}\rfloor$.
\end{definition}

As a consequence of the bounds in Proposition \ref{prop:boundvolume}, we have the following bound for the covering density of a symmetric MRD code.

\begin{proposition}\label{prop:3.1}
Let $C$ be symmetric MRD code in $\Sym_q(m)$ with minimum distance $d$ and let $t=\lfloor \frac{d(C)-1}{2}\rfloor$. 
The covering density $D$ of $C$ is upper bounded as follows 
\[D\leq  \begin{cases}
 q^{-t-\frac{t(t-1)}{2}+\lfloor \frac{t}{2}\rfloor+1} &\text{if } m \text{ is even and } d=2t+1, \\ 
  q^{-t-\frac{t(t-1)}{2}+\lfloor \frac{t}{2}\rfloor-\frac{m}{2}+\frac{1}{2}} &\text{if } m \text{ is odd and } d=2t+2, \\
   q^{-\frac{t(t-1)}{2}+\lfloor \frac{t}{2}\rfloor+1} &\text{if } m \text{ is odd and } d=2t+1, \\
    q^{-\frac{t(t-1)}{2}+\lfloor \frac{t}{2}\rfloor-\frac{m}{2}+1} &\text{if } m \text{ is even and } d=2t+2, 
  \end{cases}
  \]
and lower bounded as follows
\[D\geq  \begin{cases}
 q^{-2t-\frac{t(t-1)}{2}} &\text{if } m \text{ is even and } d=2t+1, \\
 q^{-2t-\frac{t(t-1)}{2}-\frac{m}{2}-\frac{1}{2}} &\text{if } m \text{ is odd and } d=2t+2, \\
   q^{-t-\frac{t(t-1)}{2}} &\text{if } m \text{ is odd and } d=2t+1, \\
    q^{-t-\frac{t(t-1)}{2}-\frac{m}{2}} &\text{if } m \text{ is even and } d=2t+2. 
  \end{cases}
  \]
\end{proposition}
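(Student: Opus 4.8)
The plan is to compute the covering density $D(C)=MB_t/q^{(m^2+m)/2}$ by substituting the exact value of $M$ (the dimension of a symmetric MRD code, dictated by Theorem~\ref{thm:SingletonBound}) together with the bounds on $B_t$ from Proposition~\ref{prop:boundvolume}. Since $C$ is a symmetric MRD code, $M$ equals exactly $q^{m(m-d+2)/2}$ when $m-d$ is even and $q^{(m+1)(m-d+1)/2}$ when $m-d$ is odd, so $M$ is a known power of $q$ in each case. The only quantity that is not exact is $B_t$, and this is precisely what Proposition~\ref{prop:boundvolume} bounds from above and below; plugging these into the formula for $D$ turns the whole computation into bookkeeping of exponents of $q$.

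\textbf{First} I would organize the argument around the four stated cases, which are exactly the four combinations of the parity of $m$ against $d=2t+1$ (so $t=\lfloor(d-1)/2\rfloor$ with $m-d=m-2t-1$) versus $d=2t+2$ (so $m-d=m-2t-2$). In each case the parity of $m-d$ is forced by the parity of $m$, selecting which branch of the Singleton bound applies. For instance, when $m$ is even and $d=2t+1$, $m-d$ is odd, so $M=q^{(m+1)(m-2t)/2}$; when $m$ is odd and $d=2t+1$, $m-d$ is even, so $M=q^{m(m-2t+1)/2}$; and symmetrically for the two $d=2t+2$ subcases. I would record these four exact exponents once at the start.

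\textbf{Next}, for the upper bound on $D$, I would use $B_t\le q^{mt-\frac{t(t-1)}{2}+\lfloor t/2\rfloor+1}$ from Proposition~\ref{prop:boundvolume}, so that
\[
D=\frac{MB_t}{q^{(m^2+m)/2}}\le q^{\,\log_q M + mt - \frac{t(t-1)}{2}+\lfloor t/2\rfloor + 1 - \frac{m^2+m}{2}},
\]
and then I would simplify the exponent in each of the four cases; the quadratic-in-$m$ terms coming from $\log_q M$ and from $-(m^2+m)/2$ cancel (this is the MRD property doing its work), leaving the stated exponents. For the lower bound I would instead use $B_t\ge q^{(m-1)t-\frac{t(t-1)}{2}}$ and run the identical substitution, which lowers each upper-bound exponent by exactly $t+1-\lfloor t/2\rfloor$ (the gap between the two bounds on $B_t$), reproducing the four lower-bound expressions. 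Care is needed with the floor $\lfloor t/2\rfloor$, but it passes through untouched on both sides since it appears identically in the Proposition's two bounds.

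\textbf{The main obstacle} is purely algebraic: making the quadratic terms cancel correctly and tracking the linear coefficient of $m$ through each case, since the half-integer terms $-m/2$ and the constants $\pm 1/2$ in the odd-$m$ subcases arise from the $(m+1)$ factor in the Singleton bound. The safest route is to verify that $\log_q M - (m^2+m)/2$ simplifies to $-mt$ (even-$m$, $d=2t+1$ branch) or the analogous linear expression in each case before adding the $B_t$ contribution, so I would isolate that cancellation as a preliminary computation and then merely append the $B_t$ exponent. No deeper idea is required beyond the two displayed inequalities and the exact MRD dimension.
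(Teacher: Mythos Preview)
Your approach is exactly the one the paper indicates: it presents the proposition simply ``as a consequence of the bounds in Proposition~\ref{prop:boundvolume}'' together with the exact MRD dimension from Theorem~\ref{thm:SingletonBound}, and your case-by-case substitution is precisely that computation. Two small slips in your commentary (not in the method): the lower bound on $B_t$ in Proposition~\ref{prop:boundvolume} does \emph{not} contain the $\lfloor t/2\rfloor$ term, so it does not ``pass through untouched on both sides''; consequently the gap between the two $B_t$-exponents is $t+1+\lfloor t/2\rfloor$, not $t+1-\lfloor t/2\rfloor$.
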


It is worth to mention that when the minimum distance is odd, the upper and lower bounds do not depend on the order of the matrices involved but only on the minimum distance.
Clearly, using Theorem \ref{thm:SingletonBound}, the upper bounds in Proposition \ref{prop:3.1} immediately reads as follows for a symmetric rank-metric code.

\begin{corollary}\label{Cor:1}
Let $C$ be a symmetric rank-metric code in $\Sym_q(m)$ with minimum distance $d$ and let $t=\lfloor \frac{d(C)-1}{2}\rfloor$. The covering density \[D\leq  \begin{cases}
 q^{-t-\frac{t(t-1)}{2}+\lfloor \frac{t}{2}\rfloor+1} &\text{if } m \text{ is even and } d=2t+1, \\
  q^{-t-\frac{t(t-1)}{2}+\lfloor \frac{t}{2}\rfloor-\frac{m}{2}+\frac{1}{2}} &\text{if } m \text{ is odd and } d=2t+2, \\
   q^{-\frac{t(t-1)}{2}+\lfloor \frac{t}{2}\rfloor+1} &\text{if } m \text{ is odd and } d=2t+1, \\
    q^{-\frac{t(t-1)}{2}+\lfloor \frac{t}{2}\rfloor-\frac{m}{2}+1} &\text{if } m \text{ is even and } d=2t+2. 
  \end{cases}
  \]
\end{corollary}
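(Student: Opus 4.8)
The plan is to obtain the corollary as an immediate specialization of Proposition \ref{prop:3.1}, the only new ingredient being the Singleton-type bound of Theorem \ref{thm:SingletonBound}. The guiding observation is that the covering density $D(C)=MB_{t}/q^{(m^2+m)/2}$ depends on the code $C$ only through its cardinality $M=|C|$ and the integer $t=\lfloor(d-1)/2\rfloor$: once $q$, $m$ and $d$ are fixed, both the ball size $B_{t}$ and the denominator $q^{(m^2+m)/2}$ are constants, so $D$ is a strictly increasing (in fact linear) function of $M$. Thus any upper bound on $M$ converts directly into an upper bound on $D$.

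First I would record that, by Theorem \ref{thm:SingletonBound}, an arbitrary symmetric rank-metric code $C\subseteq\Sym_q(m)$ with minimum distance $d$ satisfies $M\leq q^{D_{\max}}$, where $D_{\max}$ denotes the right-hand side of the Singleton-type bound, namely $m(m-d+2)/2$ when $m-d$ is even and $(m+1)(m-d+1)/2$ when $m-d$ is odd. Substituting this into the definition of covering density gives $D(C)\leq q^{D_{\max}}B_{t}/q^{(m^2+m)/2}$, and the right-hand side is exactly the covering density of a symmetric MRD code with the same minimum distance $d$, since such a code attains $M=q^{D_{\max}}$. Consequently the four upper bounds established for MRD codes in Proposition \ref{prop:3.1} apply verbatim to $D(C)$; this is precisely why the two lists of upper bounds are identical.

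The only point that needs a line of verification is that the case split of the corollary matches that of Proposition \ref{prop:3.1}. Writing $d=2t+1$ or $d=2t+2$, the parity of $m-d$ is determined by the parities of $m$ and $d$, and a short check shows that the four combinations (\emph{$m$ even/odd} together with \emph{$d=2t+1$} or \emph{$d=2t+2$}) partition exactly into the two Singleton regimes: $d=2t+1$ with $m$ even and $d=2t+2$ with $m$ odd give $m-d$ odd, while $d=2t+1$ with $m$ odd and $d=2t+2$ with $m$ even give $m-d$ even. This bookkeeping confirms that each case of the corollary inherits the correct bound from the corresponding case of Proposition \ref{prop:3.1}.

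There is essentially no hard step here; all the numerical work was already carried out in the proof of Proposition \ref{prop:3.1}, and the only genuinely new content is the elementary monotonicity of $D$ in $M$ combined with the Singleton bound. If anything, the main care required is the parity bookkeeping just described, to ensure the four-case partition of the corollary coincides with the one used for MRD codes.
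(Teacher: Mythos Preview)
Your argument is correct and matches the paper's approach exactly: the paper introduces the corollary with the sentence ``Clearly, using Theorem~\ref{thm:SingletonBound}, the upper bounds in Proposition~\ref{prop:3.1} immediately reads as follows for a symmetric rank-metric code,'' which is precisely the monotonicity-in-$M$ plus Singleton-bound reduction you describe. Your parity bookkeeping is also accurate and simply makes explicit what the paper leaves implicit.
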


We did not mention the lower bounds as they strongly depends on the size of the codes considered.
As the exponents appearing in the bounds of Proposition \ref{prop:3.1} are relatively small, these bounds also suggest that there should not exist quasi-perfect codes, apart from the cases in which we know that there are perfect codes.
As classically done, see e.g. \cite{etzion2005quasi} and \cite{loidreau2006properties} (for the rank-metric version), we give the definition of quasi-perfect families of codes. 

\begin{definition} 
    Let $\mathcal{F}=\{C_i\}_i$ be a family of $(m_i,|C_{i}|,d)$ symmetric rank-metric codes and let $D_{i}$ be the covering density of code $C_{i}$. If
    \[\lim_{i\to\infty}D_{i}=1,\]
    then we say that the family $\mathcal F$ is \textbf{quasi-perfect}.
\end{definition}

We will now show that, apart from the parameters from which we know there exist perfect codes, there are no quasi-perfect families of codes.

\begin{theorem}
    There are no quasi-perfect families of codes with minimum distance $d$ such that 
    \[ \left\lfloor\frac{d-1}2 \right\rfloor\geq 3 .\]
\end{theorem}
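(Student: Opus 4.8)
The plan is to show that for a family $\mathcal F=\{C_i\}$ of symmetric rank-metric codes with fixed minimum distance $d$ and $t=\lfloor(d-1)/2\rfloor\geq 3$, the covering densities $D_i$ cannot converge to $1$. The key tool is Corollary~\ref{Cor:1}, which bounds the covering density $D$ of any symmetric rank-metric code from above by a fixed negative power of $q$ (depending only on $t$ and the parity configuration of $m$ and $d$, plus a term in $m$ in the even-$d$ cases). The strategy is therefore to argue that each of the four exponent expressions appearing in Corollary~\ref{Cor:1} is strictly negative and in fact bounded away from $0$ uniformly in the family, forcing $D_i$ to stay bounded below $1$ by a constant, hence $\limsup_i D_i<1$.

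Concretely, I would proceed as follows. First I would record that since $d$ is fixed across the family, $t$ is fixed as well; only $m=m_i$ (and consequently the parity of $m_i$, which determines which case of Corollary~\ref{Cor:1} applies to $C_i$) varies. Then I would examine the controlling exponent in each of the four cases. In the two odd-distance cases ($d=2t+1$) the exponents are $-t-\frac{t(t-1)}2+\lfloor t/2\rfloor+1$ and $-\frac{t(t-1)}2+\lfloor t/2\rfloor+1$; these are independent of $m$, so it suffices to check that each is $\leq -1$ (say) for every integer $t\geq 3$. For the two even-distance cases ($d=2t+2$) the exponents carry an additional $-\tfrac m2$ term, which only makes them more negative as $m_i$ grows, so the $m$-free part of the exponent again need only be shown strictly negative. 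The heart of the argument is thus the elementary verification that the quadratic-in-$t$ quantity $-\frac{t(t-1)}2+\lfloor t/2\rfloor+1$ (the largest of the four $m$-free exponents) is negative for $t\geq 3$: indeed $\frac{t(t-1)}2$ grows quadratically while $\lfloor t/2\rfloor+1$ grows only linearly, and one checks $t=3$ directly (there $-3+1+1=-1<0$) before invoking monotonicity for $t\geq 3$.

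Having bounded each exponent above by some fixed negative constant $-\delta<0$ with $\delta$ depending only on $t$, I would conclude $D_i\leq q^{-\delta}$ for all $i$ (using that $q\geq 2$), a bound strictly less than $1$ and uniform over the family, so that $\lim_{i\to\infty}D_i$ cannot equal $1$. The main obstacle I anticipate is purely bookkeeping: one must confirm that for \emph{every} admissible $t\geq 3$ all four exponents are negative, paying attention to the floor function $\lfloor t/2\rfloor$ (treating even and odd $t$ separately if needed) and to the smallest case $t=3$, since the $-\frac{t(t-1)}2$ term only dominates decisively once $t$ is large. No deep structural input about the codes is required beyond Corollary~\ref{Cor:1}; the content is the uniform separation of the covering density from $1$.
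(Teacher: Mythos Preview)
Your proposal is correct and follows essentially the same approach as the paper: both arguments apply Corollary~\ref{Cor:1}, verify that for $t\geq 3$ each of the four exponents is at most $-1$ (you single out $-\tfrac{t(t-1)}2+\lfloor t/2\rfloor+1$ as the largest and check it at $t=3$, which is exactly the controlling case), and conclude that $D_i\leq q^{-1}<1$ uniformly, so the limit cannot equal~$1$.
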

\begin{proof}
Let $\mathcal{F}=\{C_i\}_i$ be a family of $(m_i,|C_{i}|,d)$ symmetric rank-metric codes and let 
\[D_{i}=\frac{|C_{i}||B_{t}|}{q^{\frac{i^2+i}{2}}}\]
be the covering density of code $C_{i}$, where $t=\lfloor (d(C_i)-1)/2\rfloor$. 
We first show that $D_{i}\leq 1/q=q^{-1}$ for any $i$. 
From Corollary \ref{Cor:1}, when $d=2t+1$ and $i$ is even we have
\[D_{i}\leq q^{-t-\frac{t(t-1)}{2}+\lfloor \frac{t}{2}\rfloor+1}\leq q^{-1},\]
if and only if $t\geq 2$.
Arguing in a similar way for the remaining cases, we have that  $D_{i}\leq q^{-1}$ for any $i$. 

Therefore if $\lim_{i\to \infty}D_i$ exists, it is at most $1/q$, and so it cannot be one.
\end{proof}

We will now analyze the cases in which $t=\left\lfloor\frac{d-1}2 \right\rfloor \in \{0,1,2\}$.
When $t=0$ then we have either $d=1$ or $d=2$. In this case we can consider the family of codes $\{\Sym_q(i)\}_i$, and since the covering densities of the considered codes are $1$, this family defines a (trivial) quasi-perfect family of codes.
When $d=2$ the covering density is given by the ratio of the size code and size of the ambient space, and it is easy to see that in this case we cannot have quasi-perfect families of codes.

When $t \in \{1,2\}$ we need a deeper analysis of the covering density.
Indeed, using Theorem \ref{lem:rkssym} and the fact that
\[ B_2=\frac{-q^m+q^{2m+1}-q^{m+1}+q^2}{q^2-1}, \]
we obtain the following.

\begin{proposition}\label{prop:3.2}
Let $C$ be a symmetric rank-metric code in $\Sym_q(m)$ with minimum distance $d$ and let $t=\left\lfloor\frac{d-1}2 \right\rfloor$. The covering density $D$ of $C$ is upper bounded as follows for $t=1$
\[D\leq  \begin{cases}
 1 &\text{if } m \text{ is odd and } d=3, \\
  q^{-1} &\text{if } m \text{ is even and } d=3, \\
   q^{-\frac{m}{2}} &\text{if } m \text{ is even and } d=4, \\
    q^{-\frac{m+3}{2}} &\text{if } m \text{ is odd and } d=4, 
  \end{cases}
  \]
and for $t=2$
  \[D\leq  \begin{cases}
 \frac{-q^{-m}+q-q^{-m+1}+q^{2-2m}}{q^2-1} &\text{if } m \text{ is odd and } d=5, \\
  \frac{-q^{-m-2}+q^{-1}-q^{-m-1}+q^{-2m}}{q^2-1}  &\text{if } m \text{ is even and } d=5, \\
   \frac{-q^{\frac{-3m}{2}}+q^{\frac{-m+2}{2}}-q^{\frac{-3m+2}{2}}-q^{\frac{-5m+4}{2}}}{q^2-1} &\text{if } m \text{ is even and } d=6, \\
    \frac{-q^{\frac{-3m-5}{2}}+q^{\frac{-m-3}{2}}-q^{\frac{-3m-3}{2}}-q^{\frac{-5m-1}{2}}}{q^2-1} &\text{if } m \text{ is odd and } d=6. 
  \end{cases}
  \]
If $C$ is a symmetric MRD code, then the equality holds.
\end{proposition}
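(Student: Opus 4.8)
The plan is to begin from the definition $D = MB_t/q^{(m^2+m)/2}$ and to note that, once $q$, $m$ and $d$ are fixed, the ball size $B_t$ is completely determined, so the only freedom lies in the code size $M = q^{\dim C}$. An upper bound on $D$ is therefore obtained by replacing $M$ with its largest admissible value, namely the bound of Theorem \ref{thm:SingletonBound}, giving $D \leq B_t\, q^{\,\log_q M - (m^2+m)/2}$; since a symmetric MRD code attains the Singleton bound, $M$ equals that value exactly and every such inequality collapses to an equality, which yields the final sentence of the statement at no extra cost. The first genuine step is to decide, for each $d$, which branch of Theorem \ref{thm:SingletonBound} is in force, i.e. the parity of $m-d$: for $d=3$ (resp. $d=5$) one has $m-d$ even exactly when $m$ is odd, while for $d=4$ (resp. $d=6$) one has $m-d$ even exactly when $m$ is even. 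The four rows in each table are precisely the four combinations of the parity of $m$ with the two values of $d$ allowed by the given $t$, and in each row I substitute the relevant Singleton exponent $\tfrac{m(m-d+2)}2$ or $\tfrac{(m+1)(m-d+1)}2$ for $\log_q M$.

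For $t=1$ I then insert the clean value $B_1 = q^m$ (already used in the proof of Lemma \ref{lem:t=1}, since $B_1 = S_0 + S_1 = 1 + (q^m-1)$); simplifying the exponent $m + \log_q M - \tfrac{m^2+m}2$ in the four rows returns the single powers of $q$ listed in the first table. For $t=2$ I instead insert the explicit $B_2 = \bigl(-q^m + q^{2m+1} - q^{m+1} + q^2\bigr)/(q^2-1)$ recorded before the statement; here $D$ does not collapse to a power of $q$, so I distribute the prefactor $q^{\,\log_q M - (m^2+m)/2}$ termwise over the four monomials of the numerator and read off the rational expressions of the second table.

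There is no conceptual obstacle here: the entire content is the case bookkeeping. The only real risk is mismatching a parity with the wrong Singleton branch, or mishandling the half-integer prefactor exponents that occur when $d=6$ (namely $q^{-5m/2}$ for $m$ even and $q^{(-5m-5)/2}$ for $m$ odd), where the prefactor must be distributed over all four terms of the numerator of $B_2$ without dropping a sign. I expect these $d=6$ computations to be the most delicate part of the verification.
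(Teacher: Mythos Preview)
Your proposal is correct and matches the paper's approach exactly: the paper does not give a separate proof but simply states that the bounds follow by combining the Singleton-like bound (Theorem~\ref{thm:SingletonBound}) with the explicit values $B_1=q^m$ and $B_2=\bigl(-q^m+q^{2m+1}-q^{m+1}+q^2\bigr)/(q^2-1)$, which is precisely the case-by-case substitution you outline.
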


Using the above proposition we can exclude the existence of family of quasi-perfect codes having minimum distance either $5$ or $6$.

\begin{proposition}
There are no quasi-perfect families of symmetric rank-metric codes with minimum distance $d$ such that 
\[ \left\lfloor\frac{d-1}2 \right\rfloor=2 .\]
\end{proposition}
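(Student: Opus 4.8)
The plan is to show that the covering density $D_i$ of any code in a candidate family with $t=2$ stays bounded away from $1$, so the limit cannot equal $1$. Since $t=2$ forces $d\in\{5,6\}$, I would split the analysis according to the four cases already isolated in Proposition~\ref{prop:3.2}, and in each case produce an explicit constant upper bound on $D_i$ strictly less than $1$, uniform in the order $m=m_i$.

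The key observation is that the upper bounds in Proposition~\ref{prop:3.2} for $t=2$ all have the shape of a fixed finite sum of powers of $q$ divided by $q^2-1$, where every exponent of $q$ in the numerator is either bounded above by $1$ (the leading term) or tends to $-\infty$ as $m\to\infty$. Concretely, in the case $m$ odd, $d=5$, the dominant numerator term is $q$, so
\[
D\leq \frac{-q^{-m}+q-q^{-m+1}+q^{2-2m}}{q^2-1}\leq \frac{q+o(1)}{q^2-1}\leq \frac{q+1}{q^2-1}=\frac{1}{q-1}\leq 1,
\]
and for $q\geq 2$ this is already at most $1$, with strict inequality once $q\geq 3$ or $m$ is large. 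I would carry out the same estimate in the remaining three cases: for $d=5$ and $m$ even the leading term is $q^{-1}$, and for both $d=6$ cases every exponent in the numerator is negative and decreasing in $m$, so the whole expression is dominated by a fixed power $q^{c}$ with $c<2$ divided by $q^2-1$. In each instance the resulting bound is a constant strictly smaller than $1$ (indeed bounded by something like $q^{-1}$ or smaller), independent of $i$.

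Having shown $D_i\leq \kappa<1$ for a constant $\kappa$ and all $i$, the conclusion is immediate: if $\lim_{i\to\infty}D_i$ existed it would satisfy $\lim_{i\to\infty}D_i\leq \kappa<1$, so it cannot equal $1$, and hence no such family is quasi-perfect. This mirrors exactly the argument used in the previous theorem for $t\geq 3$, the only difference being that here the cruder bounds of Corollary~\ref{Cor:1} are not sharp enough near $t=2$, which is precisely why the finer expressions of Proposition~\ref{prop:3.2} are needed.

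The main obstacle I anticipate is the borderline case $m$ odd, $d=5$: there the leading numerator term is $q$ rather than a negative power, so the bound $\frac{1}{q-1}$ is exactly $1$ when $q=2$, and one must check that for small $q$ (in particular $q=2$) the genuine negative correction terms $-q^{-m}-q^{-m+1}+q^{2-2m}$ make the total strictly less than $1$, or else appeal to the fact that $m$ ranges over infinitely many values so that the density is eventually bounded away from $1$. Verifying this strict separation uniformly — rather than merely $D_i\leq 1$ — is the delicate point; the other three cases are comfortably below $1$ and present no difficulty.
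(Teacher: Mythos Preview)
Your proposal is correct and follows the same route as the paper: a case-by-case application of Proposition~\ref{prop:3.2} to bound every $D_i$ by a fixed constant strictly less than $1$, whence the limit cannot be $1$. The obstacle you anticipate in the case $m$ odd, $d=5$ is not real: the correction $-q^{-m}-q^{1-m}+q^{2-2m}$ is negative for every admissible $m\geq 5$, so the numerator is strictly below $q$ and one gets the uniform bound $D_i< q/(q^{2}-1)<1$ for all $q\geq 2$; your crude replacement of this correction by $+1$ is what manufactured the apparent borderline at $q=2$.
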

\begin{proof}
    Let $\{C_{i}\}_i$ be a family of symmetric rank-metric codes with minimum distance $d$ where $C_i$ has covering density $D_{i}$. 
    A case by case analysis shows that, using the upper bounds in Proposition \ref{prop:3.2}, $D_i$ is strictly less than $1/q$. So, when considering the limit of the $D_i$'s, if it exists, it is bounded by $1/q$ and it cannot be one.
    Therefore it can be concluded that there are no quasi-perfect families of codes with minimum distance $d$ such that $\lfloor (d-1)/2\rfloor=2$.
\end{proof}

When $t=1$, quasi-perfect families of codes exist. Indeed, if $C_i$ is a symmetric MRD code with parameters $(2i+1,q^{\frac{i(i-1)}{2}},3)$, then $D(C_i)=1$ for any $i$ by Theorem \ref{thm:2.4} and so 
\[ \lim_{i\to\infty} D_i=1. \]

\begin{proposition}
    Let $\{C_i\}_i$ be a family of symmetric rank-metric codes where $C_i$ has parameters $(m_i,|C_i|,d)$ and $d \in \{3,4\}$.
    \begin{itemize}
        \item  If $d=3$ and the sequence $\{m_i\}_i$ has at least one subsequence of even numbers, then $\{C_i\}_i$ is not quasi-perfect.
        \item If $d=4$ then $\{C_i\}_i$ is not quasi-perfect.
    \end{itemize}
\end{proposition}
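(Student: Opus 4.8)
The plan is to reduce both bullets to the single observation that $d\in\{3,4\}$ each give $t=\lfloor(d-1)/2\rfloor=1$, so that $B_t=B_1=q^{m}$ and every covering density $D_i$ is governed by the four $t=1$ upper bounds tabulated in Proposition \ref{prop:3.2}. The crucial feature of that table is that only the entry for $m$ odd and $d=3$ equals $1$; the other three entries are negative powers of $q$, hence strictly below $1$. So for each bullet I would isolate a case in which $D_i$ stays bounded away from $1$ and then use the elementary fact that a sequence with limit $1$ must have every subsequence converging to $1$.

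For the first bullet I would pass to the infinite subsequence $\{m_{i_k}\}_k$ of even integers guaranteed by hypothesis. Proposition \ref{prop:3.2} then yields $D_{i_k}\leq q^{-1}$ for all $k$, whence $\limsup_k D_{i_k}\leq q^{-1}<1$. Were $\{C_i\}_i$ quasi-perfect, the equality $\lim_i D_i=1$ would force $\lim_k D_{i_k}=1$, a contradiction; so $\{C_i\}_i$ is not quasi-perfect.

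For the second bullet no parity split is needed. Proposition \ref{prop:3.2} gives $D_i\leq q^{-m_i/2}$ when $m_i$ is even and $D_i\leq q^{-(m_i+3)/2}\leq q^{-m_i/2}$ when $m_i$ is odd, so $D_i\leq q^{-m_i/2}$ in every case. Since a symmetric rank-metric code with minimum distance $4$ forces $m_i\geq 4$, I would conclude $D_i\leq q^{-2}<1$ uniformly in $i$, so $\limsup_i D_i\leq q^{-2}$ and the limit cannot equal $1$.

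The argument is short because Proposition \ref{prop:3.2} already carries the computational weight; the only genuine care is logical rather than numerical. In the $d=3$ case the full sequence may well contain odd-order codes with $D_i$ arbitrarily close to $1$ (these are precisely the perfect symmetric MRD codes identified in Theorem \ref{thm:2.4}), so I must argue only along the even subsequence and not with the whole sequence, and then invoke the subsequence characterisation of the limit. That bookkeeping, rather than any estimate, is the step I expect to require the most attention; everything else follows directly from the tabulated bounds together with the constraint $m_i\geq d$.
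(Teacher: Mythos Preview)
Your proposal is correct and follows essentially the same approach as the paper: use the $t=1$ entries of Proposition~\ref{prop:3.2} to bound $D_i$ away from $1$ along the even subsequence when $d=3$, and uniformly when $d=4$. Your treatment of the $d=4$ case is in fact slightly more careful than the paper's, since you do not implicitly assume $m_i\to\infty$ but instead derive the uniform bound $D_i\leq q^{-2}$ from $m_i\geq d=4$.
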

\begin{proof}
    Suppose that $d=3$.
    Let $\{m_j\}_{j\in \mathcal{J}}$ be a subsequence of $\{m_i\}_i$ with the property that $m_j$ is an even number for any $j \in \mathcal{J}$.
    From Proposition \ref{prop:3.2}, when $d=3$ and $m_j$ is even the covering density is bounded by $1/q$ and therefore
    if $\lim_{j \to \infty} D(C_j)$ exists then it is bounded by $1/q$ and so either $\lim_{i\to \infty} D(C_i)$ does not exist or it is at most $1/q$. 
    Assume now that $d=4$, then by Proposition \ref{prop:3.2} for any $i$ we have that
    \[ D(C_i)\leq q^{-\frac{m_i}2}, \]
    and therefore $\lim_{i\to \infty} D(C_i)=0$.
\end{proof}

Combining the above results we have the following classification of quasi-perfect families of codes.

\begin{corollary}
    There are quasi-perfect families of symmetric rank-metric codes with minimum distance $d$ if and only if $d\in \{1,3\}$.
    The latter case only happens when for the families of codes that does not contain infinitely many code of even order.
\end{corollary}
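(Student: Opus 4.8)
The plan is to organize the argument by the value of $t=\lfloor (d-1)/2\rfloor$ and to assemble the partial results already proved, so that the corollary reduces to bookkeeping over the finitely many regimes $t=0$, $t=1$, $t=2$ and $t\geq 3$. Concretely I would split the statement into two claims: first, that a quasi-perfect family with a prescribed minimum distance $d$ exists \emph{if and only if} $d\in\{1,3\}$; and second, that any quasi-perfect family with $d=3$ contains only finitely many codes of even order. The first claim needs an existence direction (exhibiting one quasi-perfect family for $d=1$ and one for $d=3$) together with a non-existence direction (ruling out every other $d$), while the second is a necessary condition distilled from the even/odd analysis for $d=3$.

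For the non-existence direction I would dispose of each remaining value of $d$ in turn, always reducing to the fact that the covering density is bounded away from $1$. The case $t\geq 3$, i.e. $d\geq 7$, is handled directly by the theorem ruling out quasi-perfect families in that range, where $D_i\leq q^{-1}$. The case $t=2$, i.e. $d\in\{5,6\}$, is excluded by the corresponding proposition, since the upper bounds of Proposition~\ref{prop:3.2} force $D_i<q^{-1}$. For $t=0$ with $d=2$ the covering density equals $M/q^{(m^2+m)/2}$, and Theorem~\ref{thm:SingletonBound} gives $D\leq q^{-m/2}$ for $m$ even and $D\leq q^{-(m+1)/2}$ for $m$ odd, so $D_i\to 0$. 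Finally $t=1$ with $d=4$ is excluded by the proposition on $d\in\{3,4\}$, which yields $D(C_i)\leq q^{-m_i/2}\to 0$. Hence in every regime with $d\in\{2,4,5,6\}$ or $d\geq 7$ the limit of the $D_i$, when it exists, is at most $q^{-1}<1$, so no quasi-perfect family occurs there.

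For the existence direction I would exhibit explicit quasi-perfect families. When $d=1$ the family $\{\Sym_q(i)\}_i$ has $t=0$, $B_0=1$ and size equal to that of the ambient space, so $D(C_i)=1$ for every $i$ and the family is (trivially) quasi-perfect. When $d=3$, taking $C_i$ to be the symmetric MRD code of odd order $2i+1$ and minimum distance $3$ exhibited just before this corollary—whose existence is guaranteed by the constructions recalled via Theorem~\ref{Th-Schmidt-4.4}—Theorem~\ref{thm:2.4} shows each $C_i$ is perfect, whence $D(C_i)=1$ and $\lim_{i\to\infty}D_i=1$; this realizes $d=3$. It then remains to prove the necessary condition: a quasi-perfect family with $d=3$ has only finitely many codes of even order. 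This follows by contraposition from the proposition on $d\in\{3,4\}$, for if the order sequence $\{m_i\}_i$ contained infinitely many even values, then along that subsequence Proposition~\ref{prop:3.2} bounds the covering density by $q^{-1}$, forcing any limit to be at most $q^{-1}<1$ and contradicting quasi-perfectness. The only point requiring care—though not a deep obstacle—is to keep the quantifiers straight: the first claim asserts the \emph{existence} of a quasi-perfect family for $d\in\{1,3\}$, witnessed by the MRD construction, whereas the even-order restriction is a \emph{necessary} constraint on an arbitrary quasi-perfect family with $d=3$; conflating the two would overstate the classification.
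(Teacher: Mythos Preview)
Your proposal is correct and follows precisely the approach the paper intends: the corollary is stated without its own proof, only the phrase ``Combining the above results,'' and your write-up makes that combination explicit by sweeping through $t=0,1,2$ and $t\geq 3$ and invoking the same propositions and theorem the paper has just established. Your treatment of $d=2$ spells out the Singleton-bound computation that the paper dismisses as ``easy to see,'' but otherwise the decomposition and the ingredients are identical.
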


\section*{Acknowledgements}

The research of the last two authors was partially supported by the project COMBINE of ``VALERE: VAnviteLli pEr la RicErca" of the University of Campania ``Luigi Vanvitelli'' and was partially supported by the INdAM - GNSAGA Project \emph{Tensors over finite fields and their applications}, number E53C23001670001.

\bibliographystyle{abbrv}
\bibliography{biblio}

\begin{thebibliography}{10}

\bibitem{abiad2024eigenvalue}
A.~Abiad, G.~N. Alfarano, and A.~Ravagnani.
\newblock Eigenvalue bounds and alternating rank-metric codes.
\newblock {\em arXiv preprint arXiv:2405.09254}, 2024.

\bibitem{bartz2022rank}
H.~Bartz, L.~Holzbaur, H.~Liu, S.~Puchinger, J.~Renner, A.~Wachter-Zeh, et~al.
\newblock Rank-metric codes and their applications.
\newblock {\em Foundations and Trends{\textregistered} in Communications and Information Theory}, 19(3):390--546, 2022.

\bibitem{bik2023higher}
A.~Bik and A.~Neri.
\newblock Higher-degree symmetric rank-metric codes.
\newblock {\em arXiv preprint arXiv:2303.06745}, 2023.

\bibitem{carlitz1954sym}
L.~Carlitz.
\newblock Representations by quadratic forms in a finite field.
\newblock {\em Duke Mathematical Journal}, 21(1):123--137, 1954.

\bibitem{Delsarte:1978aa}
P.~Delsarte.
\newblock Bilinear forms over a finite field, with applications to coding theory.
\newblock {\em Journal of Combinatorial Theory, Series A}, 25(3):226 -- 241, 1978.

\bibitem{delsarte1975alternating}
P.~Delsarte and J.-M. Goethals.
\newblock Alternating bilinear forms over {GF}$(q)$.
\newblock {\em Journal of Combinatorial Theory, Series A}, 19(1):26--50, 1975.

\bibitem{etzion2005quasi}
T.~Etzion and B.~Mounits.
\newblock Quasi-perfect codes with small distance.
\newblock {\em IEEE transactions on information theory}, 51(11):3938--3946, 2005.

\bibitem{gabidulin2002representation}
E.~Gabidulin and N.~Pilipchuk.
\newblock Representation of a finite field by symmetric matrices and applications.
\newblock In {\em Proc. 8th Int. Workshop on Algebraic and Combinatorial Coding Theory, Tsarskoe Selo, Russia}, pages 120--123, 2002.

\bibitem{gabidulin2003transposed}
E.~Gabidulin and N.~Pilipchuk.
\newblock Transposed rank codes based on symmetric matrices.
\newblock In {\em Proc. of the WCC}, pages 203--211, 2003.

\bibitem{gabidulin1985theory}
E.~M. Gabidulin.
\newblock Theory of codes with maximum rank distance.
\newblock {\em Problemy peredachi informatsii}, 21(1):3--16, 1985.

\bibitem{gabidulin1991ideals}
E.~M. Gabidulin, A.~Paramonov, and O.~Tretjakov.
\newblock Ideals over a non-commutative ring and their application in cryptology.
\newblock In {\em Advances in Cryptology -- EUROCRYPT’91}, pages 482--489. Springer, 1991.

\bibitem{gabidulin2004symmetric}
E.~M. Gabidulin and N.~I. Pilipchuk.
\newblock Symmetric rank codes.
\newblock {\em Problems of Information Transmission}, 40(2):103--117, 2004.

\bibitem{gabidulin2006symmetric}
E.~M. Gabidulin and N.~I. Pilipchuk.
\newblock Symmetric matrices and codes correcting rank errors beyond the $\lfloor (d-1)/2\rfloor$ bound.
\newblock {\em Discrete applied mathematics}, 154(2):305--312, 2006.

\bibitem{GorlaRavagnani}
E.~Gorla and A.~Ravagnani.
\newblock Codes endowed with the rank metric.
\newblock In {\em Network Coding and Subspace Designs}, pages 03--23. Springer, 2018.

\bibitem{kadir2022encoding}
W.~K. Kadir, C.~Li, and F.~Zullo.
\newblock Encoding and decoding of several optimal rank metric codes.
\newblock {\em Cryptography and Communications}, 14(6):1281--1300, 2022.

\bibitem{loidreau2006properties}
P.~Loidreau.
\newblock Properties of codes in rank metric.
\newblock {\em arXiv preprint cs/0610057}, 2006.

\bibitem{longobardi2020automorphism}
G.~Longobardi, G.~Lunardon, R.~Trombetti, and Y.~Zhou.
\newblock Automorphism groups and new constructions of maximum additive rank metric codes with restrictions.
\newblock {\em Discrete Mathematics}, 343(7):111871, 2020.

\bibitem{macwilliams1969orthogonal}
J.~MacWilliams.
\newblock Orthogonal matrices over finite fields.
\newblock {\em The American Mathematical Monthly}, 76(2):152--164, 1969.

\bibitem{roth1991maximum}
R.~M. Roth.
\newblock Maximum-rank array codes and their application to crisscross error correction.
\newblock {\em {IEEE Transactions} on Information Theory}, 37(2):328--336, 1991.

\bibitem{schmidt2010symmetric}
K.-U. Schmidt.
\newblock Symmetric bilinear forms over finite fields of even characteristic.
\newblock {\em Journal of Combinatorial Theory, Series A}, 117(8):1011--1026, 2010.

\bibitem{schmidt2015symmetric}
K.-U. Schmidt.
\newblock Symmetric bilinear forms over finite fields with applications to coding theory.
\newblock {\em Journal of Algebraic Combinatorics}, 42(2):635--670, 2015.

\bibitem{schmidt2018hermitian}
K.-U. Schmidt.
\newblock Hermitian rank distance codes.
\newblock {\em Designs, Codes and Cryptography}, 86(7):1469--1481, 2018.

\bibitem{Sheekey2019}
J.~Sheekey.
\newblock {MRD} codes: Constructions and connections.
\newblock {\em Combinatorics and Finite Fields: Difference Sets, Polynomials, Pseudorandomness and Applications, edited by Kai-Uwe Schmidt and Arne Winterhof, Berlin, Boston: De Gruyter, 2019, pp. 255-286}, 2019.

\bibitem{silva2008rank}
D.~Silva, F.~R. Kschischang, and R.~Koetter.
\newblock A rank-metric approach to error control in random network coding.
\newblock {\em IEEE transactions on information theory}, 54(9):3951--3967, 2008.

\bibitem{trombetti2020maximum}
R.~Trombetti and F.~Zullo.
\newblock On maximum additive {H}ermitian rank-metric codes.
\newblock {\em Journal of Algebraic Combinatorics}, pages 1--21, 2020.

\bibitem{WU201379}
B.~Wu and Z.~Liu.
\newblock Linearized polynomials over finite fields revisited.
\newblock {\em Finite Fields and Their Applications}, 22:79--100, 2013.

\bibitem{zhou2020equivalence}
Y.~Zhou.
\newblock On equivalence of maximum additive symmetric rank-distance codes.
\newblock {\em Designs, Codes and Cryptography}, 88(5):841--850, 2020.

\end{thebibliography}

Usman Mushrraf and Ferdinando Zullo\\
Dipartimento di Matematica e Fisica,\\
Universit\`a degli Studi della Campania ``Luigi Vanvitelli'',\\
I--\,81100 Caserta, Italy\\
{{\em \{usman.mushrraf,ferdinando.zullo\}@unicampania.it}}
\end{document}